\newtheorem{theorem}{Theorem}
\newtheorem{remark}{Remark}
\renewcommand{\vec}[1]{\mathbf{#1}}
\def\blfootnote{\xdef\@thefnmark{}\@footnotetext}
\begin{document}
		\title{\huge{Analytical Characterization of Coverage Regions for STAR-RIS-aided NOMA/OMA Communication Systems}} 
	\author{Farshad~Rostami~Ghadi\IEEEmembership{},~F. Javier~L\'opez-Mart\'inez, \IEEEmembership{Senior Member}, \textit{IEEE},~and Kai-Kit~Wong,~\IEEEmembership{Fellow}, \textit{IEEE}
	}
	\maketitle
	\begin{abstract}
We provide an analytical characterization of the coverage region of simultaneously transmitting and reflecting reconfigurable intelligent surface (STAR-RIS)-aided two-user downlink communication systems. The cases of orthogonal multiple access (OMA) and non-orthogonal multiple access (NOMA) are considered, under the energy-splitting (ES) protocol. 
 Results confirm that the use of STAR-RISs is beneficial to extend the coverage region, and that the use of NOMA provides a better performance compared to the OMA counterpart.
	\end{abstract}
	\begin{IEEEkeywords}
		Coverage region, reconfigurable intelligent surface, non-orthogonal multiple access, energy-splitting 
	\end{IEEEkeywords}
	\maketitle
\blfootnote{\noindent 
 This work was funded in part by Junta de Andaluc\'ia through grant EMERGIA20-00297, and in part by MCIN/AEI/10.13039/501100011033 through grant PID2020-118139RB-I00. 
	}
	
	 	\blfootnote{\noindent Farshad Rostami Ghadi and F. Javier L\'opez-Mart\'inez are with the Communications and Signal Processing Lab, Telecommunication Research Institute (TELMA), Universidad de M\'alaga, M\'alaga, 29010, (Spain). F. Javier L\'opez-Mart\'inez is also with the Dept. Signal Theory, Networking and Communications, University of Granada, 18071, Granada (Spain) (e-mail: $\rm farshad@ic.uma.es$, $\rm fjlm@ugr.es$).}

   \blfootnote{\noindent Kai-Kit Wong is with the Department of Electronic and Electrical Engineering, University College London, London WC1E 6BT, United Kingdom and with Yonsei Frontier Lab, Yonsei University, Seoul, 03722, Korea.(e-mail: kai-kit.wong@ucl.ac.uk).}
	 		
	 	
	
	\blfootnote{Digital Object Identifier 10.1109/XXX.2021.XXXXXXX}
	\vspace{0mm}
	\section{Introduction}\label{sec-intro}
	Reconfigurable intelligent surfaces (RISs) have been recently introduced as promising approaches to enhance the performance of coverage and energy efficiency in sixth-generation (6G) wireless communications \cite{basar2019wireless,basharat2021reconfigurable}. Generally speaking, RISs are engineered structures equipped with a large number of low-cost passive reflecting elements, which can intelligently reconfigure the wireless propagation environment by adjusting the electromagnetic response of each reflecting element, and thus enable a smart radio environment (SRE). Compared to multi-antenna and relay communication systems, RISs do not require dedicated radio frequency (RF) chains, which reduces energy consumption and hardware costs \cite{wu2021intelligent}. 
	
 
 One of the key limitations of RISs is related to the need of transmitters and receivers to be located on the same side of the RIS, i.e., within the same \textit{half-space} of the SRE. To further extend the potential of RISs to the entire \textit{full-space}, the concept of simultaneously transmitting and reflecting RISs (STAR-RISs) has been proposed \cite{liu2021star}. 
 The use of STAR-RIS enables a 360$^\circ$ coverage region of SREs, which implies that separate reflection and refraction passive beamforming can be designed for the different serving areas \cite{zhang2022intelligent}. 

Given the appealing features of STAR-RIS, there has been a growing interest to combine these devices with non-orthogonal multiple access (NOMA) techniques \cite{aldababsa2021simultaneous}, with a special focus on the two-user case \cite{zhang2022star,yue2022simultaneously,wang2022outage,wu2021coverage} as a key building block in communication theory. With the aim of maximizing the sum rate, the authors in \cite{aldababsa2021simultaneous} analyzed the STAR-RIS-assisted NOMA under the mode-switching (MS) protocol. The outage probability and diversity gain analysis for STAR-RIS-aided NOMA system with perfect and imperfect successive interference cancellation (SIC) were studied in \cite{zhang2022star} and \cite{yue2022simultaneously}. In addition, by assuming fading channel correlation, the authors in \cite{wang2022outage} derived the outage probability for a STAR-RIS-aided NOMA network. Moreover, a coverage range maximization problem over a STAR-RIS-assisted NOMA network under a quality of service (QoS) requirement for each user was formulated in \cite{wu2021coverage}. While this latter work showed the benefits of NOMA over conventional OMA in terms of coverage range extension from an optimization perspective, the performance characterization in terms of the achievable average rates is challenging due to the involved analytical derivations required in this scenario -- even for the two-user case. Thus, motivated by the importance of rigorously analyzing the coverage area in STAR-RIS, we derive analytical compact expressions for the coverage regions for both transmitting and reflecting users over a STAR-RIS-aided downlink NOMA/OMA, by exploiting the definition of coverage region provided in \cite{aggarwal2009maximizing}. 
%
%
%
%
Numerical results show that the proposed analytical expressions are rather accurate, and confirms that STAR-RIS can significantly enhance the system performance in terms of the coverage region for the NOMA scenario compared with the OMA scheme. 

The remainder of this paper is organized as follows: Section \eqref{sec-sys} describes the system model of interest, while section \ref{sec-cov} provides the analysis of the coverage region. Numerical results and conclusion are provided in Sections \ref{sec-num} and \ref{sec-con}, respectively.    
 \section{System Model}\label{sec-sys}
	We consider a STAR-RIS aided downlink NOMA as shown in Fig. \ref{system}, where a single-antenna base station (BS) serves two single-antenna mobile users (i.e., reflecting user $u_\mathrm{r}$ and transmitting user $u_\mathrm{t}$) aided by a fixed STAR-RIS with $N$ elements. We assume that the direct link between the BS and the reflecting user $u_\mathrm{r}$ is blocked due to obstacles, such as trees or buildings, and thus only reflecting links via the STAR-RIS are considered for the reflecting user $u_\mathrm{r}$. In addition, the transmitting user $u_\mathrm{t}$ is located behind the STAR-RIS, thereby the direct link for the transmitting user $u_\mathrm{t}$ is blocked. Thus, the received signal at the user $u_\mathrm{k}$, $\mathrm{k}\in\left\{\mathrm{r},\mathrm{t}\right\}$ is expressed as
	\begin{align}
	Y_\mathrm{t}=\textcolor{black}{}\sqrt{P}\vec{h}^{T}\vec{\Phi}\vec{g}_\mathrm{t}\left(\textcolor{black}{\sqrt{p_\mathrm{t}}}X_\mathrm{t}+\textcolor{black}{\sqrt{p_\mathrm{r}}}X_\mathrm{r}\right)+Z_\mathrm{t}
	\end{align}
	\begin{align}
	Y_\mathrm{r}=\textcolor{black}{}\sqrt{P}\vec{h}^{T}\vec{\Psi}\vec{g}_\mathrm{r}\left(\textcolor{black}{\sqrt{p_\mathrm{t}}}X_\mathrm{t}+\textcolor{black}{\sqrt{p_\mathrm{r}}}X_\mathrm{r}\right)+Z_\mathrm{r}
\end{align}
where $P$ is the total transmit power, $X_\mathrm{k}$ denotes the symbol transmitted to user $u_\mathrm{k}$ with unit power  (i.e., $\mathbb{E}\left[|X_\mathrm{k}|^2\right]=1$), $p_\mathrm{k}$ is the power allocation factor for $u_\mathrm{k}$, so that $p_t+p_r=1$, and $Z_\mathrm{k}$ is the additive white Gaussian noise (AWGN) with zero mean and variance $\sigma^2$ at $u_\mathrm{k}$. We consider the energy-splitting (ES)\footnote{These results can extended to the mode-switching (MS) and time-switching (TS) modes by allocating different numbers of active STAR-RIS elements or time blocks, respectively.} protocol for our proposed STAR-RIS-aided network. Hence, we assume that all elements of the STAR-RIS simultaneously operate refraction and reflection modes, while the total radiation energy is split into two parts, i.e, $\vec{\Phi}=\text{diag}\left(\left[\textcolor{black}{\beta_\mathrm{t,1}}\mathrm{e}^{j\phi_1},\textcolor{black}{\beta_\mathrm{t,2}}\mathrm{e}^{j\phi_2},...,\textcolor{black}{\beta_{\mathrm{t},N}}\mathrm{e}^{j\phi_{N}}\right]\right)$ and  $\vec{\Psi}=\text{diag}\left(\left[\textcolor{black}{\beta_\mathrm{r,1}}\mathrm{e}^{j\psi_1},\textcolor{black}{\beta_\mathrm{r,2}}\mathrm{e}^{j\psi_2},...,\textcolor{black}{\beta_{\mathrm{r},N}}\mathrm{e}^{j\psi_{N}}\right]\right)$ denote the STAR-RIS transmitting and reflecting coefficients matrices, respectively, where $\phi_n$ and $\psi_n$ are the adjustable phases induced by the $n$th element of the STAR-RIS during transmission and reflection, whereas \textcolor{black}{$\beta_{\mathrm{k},n}$} denote the adjustable transmission/reflection coefficients, with $\beta^2_{\mathrm{r},n}+\beta^2_{\mathrm{t},n}\leq1$ due to the passive nature of the STAR-RIS. In order to minimize the system complexity, we assume in the sequel that all elements have the same amplitude coefficients \cite{wu2021coverage}, i.e., \textcolor{black}{$\beta_{\mathrm{k},n}=\beta_\mathrm{k},\,\forall{n=1\ldots N}$}. The vectors $\vec{h}^T=d^{-\alpha}\left[h_1\mathrm{e}^{-j\theta_1}, h_2\mathrm{e}^{-j\theta_2}, ..., h_{N}\mathrm{e}^{-j\theta_{N}}\right]$, $\vec{g}_\mathrm{t}=d_\mathrm{t}^{-\alpha}\left[g_{\mathrm{t},1}\mathrm{e}^{-j\zeta_1},g_{\mathrm{t},2}\mathrm{e}^{-j\zeta_2},...,g_{\mathrm{t},N}\mathrm{e}^{-j\zeta_{N}}\right]$, and  $\vec{g}_\mathrm{r}=d_\mathrm{r}^{-\alpha}\left[g_{\mathrm{r},1}\mathrm{e}^{-j\eta_1},g_{\mathrm{r},2}\mathrm{e}^{-j\eta_2},...,g_{\mathrm{r},N}\mathrm{e}^{-j\eta_{N}}\right]$ contain the channel gains from the BS to each element of the STAR-RIS, the channel gains from each element of the STAR-RIS to $u_\mathrm{t}$, and the channel gains from each element of the STAR-RIS to $u_\mathrm{r}$, respectively. The terms $d$, $d_\mathrm{t}$, and $d_\mathrm{r}$ define the distances between the BS and the STAR-RIS, the distance between the STAR-RIS and $u_\mathrm{t}$, and the distance between the STAR-RIS and $u_\mathrm{r}$, respectively, where $\alpha$ is the path-loss exponent. In addition, $h_n$, $g_{\mathrm{t},n}$, and $g_{\mathrm{r},n}$ are the amplitudes of the corresponding channel gains, and $\mathrm{e}^{-j\theta_n}$, $\mathrm{e}^{-j\zeta_n}$, and $\mathrm{e}^{-j\eta_n}$ represent
the phases of the respective links.  
	\begin{figure}[!t]
	\centering
	\includegraphics[width=0.9\columnwidth]{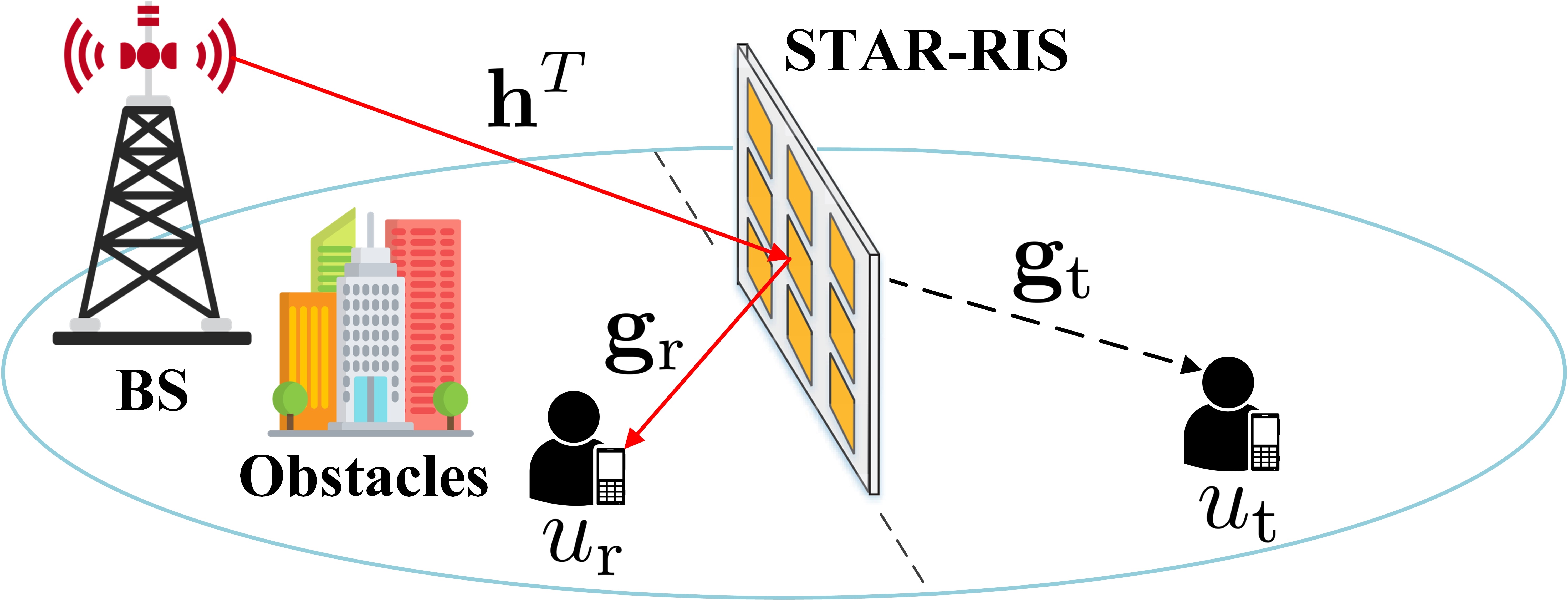}
	\caption{System model of a downlink STAR-RIS-aided NOMA communication.}
	\label{system}
\end{figure}
	
	As per the principles of NOMA, the BS transmits the signals of both reflecting and transmitting users using the same time and frequency resources by  superposition coding. Besides, the NOMA user with a better channel condition conducts successive interference cancellation (SIC), while the other user decodes its signal directly treating interference as noise. 
Without loss of generality, we assume $d_\mathrm{r}<d_\mathrm{t}$ representing the scenario on which a STAR-RIS is deployed to overcome blockages for reflection users (see Fig. \ref{system}). For the sake of notational simplicity yet without loss of generality, we assume a fixed decoding order based on path loss \cite{zhang2022star,chen2016application,yue2022simultaneously,zhu2020power} so that the strong user $u_\mathrm{r}$ operates the SIC process. This implies in turn that the user in the transmission plane $u_\mathrm{t}$ is allocated more power by the BS (i.e., $p_{\mathrm{r}}<p_{\mathrm{t}}$). 
Thus, the signal-to-interference-plus-nois ratio (SINR) of the SIC process for $u_\mathrm{r}$ can be defined as
	\begin{align}
	\gamma_\mathrm{sic}=\frac{\rho\textcolor{black}{p_{\mathrm{r}}\beta_{\mathrm{r}}^2}d^{-\alpha}d_{\mathrm{r}}^{-\alpha}\left|\sum_{n=1}^Nh_ng_{\mathrm{r},n}\mathrm{e}^{j\left(\phi_n-\theta_n-\zeta_n\right)}\right|^2}{\rho\textcolor{black}{p_{\mathrm{t}}\beta_{\mathrm{r}}^2}d^{-\alpha}d_{\mathrm{r}}^{-\alpha}\left|\sum_{n=1}^Nh_ng_{\mathrm{r},n}\mathrm{e}^{j\left(\phi_n-\theta_n-\zeta_n\right)}\right|^2+1},
	\end{align}
where $\rho=P/\sigma^2$ is transmit signal-to-noise ratio (SNR). Then, with the aid of SIC, $u_\mathrm{r}$ removes the message of $u_\mathrm{t}$ from its received signal and decodes its required information with the following SNR
\begin{align}
\gamma_\mathrm{r}=\rho d^{-\alpha}d_{\mathrm{r}}^{-\alpha}\textcolor{black}{p_{\mathrm{r}}\beta_{\mathrm{r}}^2}\left|\sum_{n=1}^Nh_ng_{\mathrm{r},n}\mathrm{e}^{j\left(\phi_n-\theta_n-\zeta_n\right)}\right|^2.
\end{align}
At the same time, $u_\mathrm{t}$ directly decodes its signal by considering the signal of $u_\mathrm{r}$ as interference. Thus, the SINR at $u_\mathrm{t}$ can be expressed as
	   \begin{align}
	   	\gamma_\mathrm{t}=\frac{\rho\textcolor{black}{p_{\mathrm{t}}\beta_{\mathrm{t}}^2}d^{-\alpha}d_{\mathrm{t}}^{-\alpha}\left|\sum_{n=1}^Nh_ng_{\mathrm{t},n}\mathrm{e}^{j\left(\psi_n-\theta_n-\eta_n\right)}\right|^2}{\rho \textcolor{black}{p_{\mathrm{r}}\beta_{\mathrm{t}}^2}d^{-\alpha}d_{\mathrm{t}}^{-\alpha}\left|\sum_{n=1}^Nh_ng_{\mathrm{t},n}\mathrm{e}^{j\left(\psi_n-\theta_n-\eta_n\right)}\right|^2+1}.
	   \end{align}
   
   On the other hand, the SNR at user $u_\mathrm{k}$ in OMA, such as time division multiple access (TDMA), can be respectively defined as
   \begin{align}
   	\gamma^{\mathrm{o}}_\mathrm{r}=\rho d^{-\alpha}d^{-\alpha}_\mathrm{r}\textcolor{black}{\beta^2_\mathrm{r}}\left|\sum_{n=1}^Nh_ng_{\mathrm{r},n}\mathrm{e}^{j\left(\phi_n-\theta_n-\zeta_n\right)}\right|^2,
\end{align}
   \begin{align}
	\gamma^{\mathrm{o}}_\mathrm{t}=\rho d^{-\alpha}d^{-\alpha}_\mathrm{t}\textcolor{black}{\beta^2_\mathrm{t}}\left|\sum_{n=1}^Nh_ng_{\mathrm{t},n}\mathrm{e}^{j\left(\textcolor{black}{\psi_n}-\theta_n-\eta_n\right)}\right|^2.
\end{align}


Aiming to maximize the SNR at both reflecting and transmitting users, and as a benchmark for upper bounding the achievable performances, the ideal phase shifting is considered in the proposed system model \cite{zhang2022star}. Without loss of generality, we assume that all channels undergo Rayleigh fading. Even in this case, the equivalent channels at both users include a sum of products of random variables (RVs), which does not admit a tractable probability density function (PDF) expression. For this purpose, we exploit a Gamma approximation to derive equivalent channel distributions, which is reportedly more accurate than the central limit theorem (CLT) approximation \cite{atapattu2020reconfigurable}. Thus, by defining $W_\mathrm{k}=\left|\sum_{n=1}^Nh_ng_{\mathrm{k},n}\right|^2$, the PDF of $W_\mathrm{k}$ 
can be approximated as:
   \begin{align}
	f_{W_\mathrm{k}}\left(w_\mathrm{k}\right)=\frac{w_\mathrm{k}^{\kappa-1}\mathrm{e}^{-\frac{w_\mathrm{k}}{\tau}}}{\Gamma\left(\kappa\right)\tau^\kappa},
\end{align}
where $\kappa=\frac{N\pi^2}{16-\pi^2}$ and $\tau=\frac{16-\pi^2}{4\pi}$ denote the shape and scale parameters, respectively.

\section{Coverage Region Analysis}\label{sec-cov}
We now derive the analytical expressions of the coverage regions for both reflecting and transmitting users. To this end, we exploit the concept of coverage region provided in \cite[Def. 2]{aggarwal2009maximizing}. 
Without loss of generality, we assume that the BS is located at the origin of the two-dimensional coordinate and the STAR-RIS is placed at the fixed distance $d$ from the BS. Under such an assumption, the coverage region can be defined as the geographic area for which the rates $R_\mathrm{r}>0$ and $R_\mathrm{t}>0$ are guaranteed, i.e., 
\begin{align}\label{eq-cov-def}
	\mathcal{G}\left(d_\mathrm{t},d_\mathrm{r}\right)\overset{\Delta}{=}\left\{d_\mathrm{t},d_\mathrm{r},C_{\mathrm{t}}\left(d_\mathrm{t}\right)> R_\mathrm{t},C_{\mathrm{r}}\left(d_\mathrm{r}\right)> R_\mathrm{r}\right\},
\end{align} 
where $C_\mathrm{t}\left(d_\mathrm{t}\right)=\mathbb{E}_{\gamma_\mathrm{t}}\left[\log_2\left(1+\gamma_\mathrm{t}\right)\right]$ and $C_\mathrm{r}\left(d_\mathrm{r}\right)=\mathbb{E}_{\gamma_\mathrm{r}}\left[\log_2\left(1+\gamma_\mathrm{r}\right)\right]$ denote the channel capacity when the users $u_\mathrm{t}$ and $u_\mathrm{t}$ are located at distances $d_\mathrm{t}$ and $d_\mathrm{r}$, respectively. 
\begin{remark}
The coverage region in \eqref{eq-cov-def} refers to the maximum distance $d_\mathrm{k}$ that user $u_\mathrm{k}$ can have from the STAR-RIS as far as reliable communication is guaranteed. This definition is related to an expectation over the achievable rate classically associated to an ergodic setting, which is also relevant in a block-fading setting with link adaptation capabilities \cite{lozano2012yesterday}.
	\end{remark}
%
\begin{theorem}\label{thm-cov-n}
The coverage region  over a STAR-RIS-aided NOMA communication for users $u_\mathrm{r}$ and $u_\mathrm{t}$ with defined parameters $R^{\mathcal{N}}_\mathrm{k}$, $\tau$, $\kappa$, $\rho$, and $\beta_{\mathrm{k}}$ is respectively given by 
\begin{align}\label{cov-r}
	R^\mathcal{N}_\mathrm{r}\leq\frac{1}{\Gamma\left(\kappa\right)\ln 2}G_{3,2}^{1,3}\left(\begin{array}{c}
		\frac{\tau\rho\textcolor{black}{p_\mathrm{r}}\beta_{\mathrm{r}}^2}{d^\alpha d_\mathrm{r}^\alpha}\end{array}
	\Bigg\vert\begin{array}{c}
		\left(1-\kappa,1,1\right)\\
		(1,0)\\
	\end{array}\right),
\end{align}
\begin{align}\nonumber
	R^\mathcal{N}_\mathrm{t}\leq&\,\frac{1}{\Gamma(\kappa)\ln2}G_{3,2}^{1,3}\left(\begin{array}{c}
		\frac{\tau\rho\beta_{\mathrm{t}}^2}{d^\alpha d_\mathrm{t}^\alpha}\end{array}
	\Bigg\vert\begin{array}{c}
		\left(1-\kappa,1,1\right)\\
		(1,0)\\
	\end{array}\right)\\
	&-\frac{1}{\Gamma(\kappa)\ln2}G_{3,2}^{1,3}\left(\begin{array}{c}
		\frac{\tau\rho\textcolor{black}{p_\mathrm{r}}\beta_{\mathrm{t}}^2}{d^\alpha d_\mathrm{t}^\alpha}\end{array}
	\Bigg\vert\begin{array}{c}
		\left(1-\kappa,1,1\right)\\
		(1,0)\\
	\end{array}\right),
\end{align}
where $G^{m,n}_{p,q}(.)$ denotes the Meijer's G-function. 
\end{theorem}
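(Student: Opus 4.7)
The plan is to reduce each ergodic capacity to an expectation of the form $\mathbb{E}[\log_2(1+aW)]$ for a single Gamma random variable $W$, and then to evaluate that expectation in closed form via a Meijer G-function integral identity. Ideal phase alignment of the STAR-RIS elements collapses the squared-magnitude sums in $\gamma_\mathrm{r}$ and $\gamma_\mathrm{t}$ to the variables $W_\mathrm{r}$ and $W_\mathrm{t}$ whose approximating Gamma PDF is already given in the paper. For the reflecting user this is immediate: $C_\mathrm{r}(d_\mathrm{r})=\mathbb{E}[\log_2(1+a_\mathrm{r}W_\mathrm{r})]$ with $a_\mathrm{r}=\rho p_\mathrm{r}\beta_\mathrm{r}^2/(d^\alpha d_\mathrm{r}^\alpha)$.

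For the transmitting user, $\gamma_\mathrm{t}$ is not linear in $W_\mathrm{t}$, but the NOMA constraint $p_\mathrm{t}+p_\mathrm{r}=1$ permits the factorisation
\begin{equation*}
1+\gamma_\mathrm{t}=\frac{1+\bar a_\mathrm{t}W_\mathrm{t}}{1+\hat a_\mathrm{t}W_\mathrm{t}},\quad \bar a_\mathrm{t}=\frac{\rho\beta_\mathrm{t}^2}{d^\alpha d_\mathrm{t}^\alpha},\quad \hat a_\mathrm{t}=\frac{\rho p_\mathrm{r}\beta_\mathrm{t}^2}{d^\alpha d_\mathrm{t}^\alpha},
\end{equation*}
so that taking $\log_2$ splits $C_\mathrm{t}(d_\mathrm{t})$ into a difference of two expectations of the same structural form as $C_\mathrm{r}(d_\mathrm{r})$. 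This already explains the two-term structure of the stated bound on $R_\mathrm{t}^\mathcal{N}$.

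The workhorse is therefore the evaluation of $\mathbb{E}[\ln(1+aW)]$ for $W\sim\Gamma(\kappa,\tau)$. After the substitution $u=w/\tau$ the expectation becomes $\tfrac{1}{\Gamma(\kappa)}\int_0^\infty u^{\kappa-1}e^{-u}\ln(1+a\tau u)\,du$. I would next replace the logarithm by its Meijer-G representation $\ln(1+z)=G^{1,2}_{2,2}(z\,|\,1,1;\,1,0)$ and invoke the standard identity
\begin{equation*}
\int_0^\infty u^{s-1}e^{-u}\,G^{m,n}_{p,q}(\omega u\,|\,a_p;\,b_q)\,du=G^{m,n+1}_{p+1,q}(\omega\,|\,1-s,a_p;\,b_q),
\end{equation*}
tabulated in standard references. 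With $s=\kappa$ and $\omega=a\tau$, this collapses the integral to $G^{1,3}_{3,2}(a\tau\,|\,1-\kappa,1,1;\,1,0)$ which, divided by $\Gamma(\kappa)\ln 2$, is exactly the building block appearing in both stated bounds. Instantiating $a\in\{a_\mathrm{r},\bar a_\mathrm{t},\hat a_\mathrm{t}\}$ and enforcing $C_\mathrm{r}(d_\mathrm{r})\geq R_\mathrm{r}^\mathcal{N}$ and $C_\mathrm{t}(d_\mathrm{t})\geq R_\mathrm{t}^\mathcal{N}$ from the coverage-region definition \eqref{eq-cov-def} yields the theorem.

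The hardest step will be the careful application of the Meijer-G integral identity: ensuring that the new $1-\kappa$ entry is inserted in the upper row at the position demanded by the $n\to n+1$ shift, that the $(1,0)$ lower row inherited from the logarithm representation is preserved, and that the Mellin--Barnes contour conditions are satisfied for the parameter ranges of interest (which amounts to $\kappa>0$, automatic since $\kappa=N\pi^2/(16-\pi^2)>0$). Everything else is algebraic bookkeeping and direct substitution into the coverage inequality.
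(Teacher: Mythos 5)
Your proposal is correct and follows essentially the same route as the paper: the same factorization $1+\gamma_\mathrm{t}=(1+\bar a_\mathrm{t}W_\mathrm{t})/(1+\hat a_\mathrm{t}W_\mathrm{t})$ via $p_\mathrm{t}+p_\mathrm{r}=1$, the same Meijer-G representation of the logarithm, and the same Laplace-type integral identity (the paper cites it as Prudnikov Eq.~2.24.3.1), differing only in the cosmetic choice of substitution ($u=w/\tau$ versus the paper's $v=aw$).
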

\begin{proof}
The coverage region is expressed in terms of an expectation over the random equivalent channel $w_\mathrm{k}$. Thus, we first determine the coverage region of $u_\mathrm{r}$ as follows:
	\begin{align}
	&R^\mathcal{N}_\mathrm{r}\leq\mathbb{E}_{w_\mathrm{r}}\left[\log_2\left(1+\frac{\rho\textcolor{black}{p_\mathrm{r}} \beta_{\mathrm{r}}^2w_\mathrm{r}}{d^{\alpha}d_{\mathrm{r}}^{\alpha}}\right)\right]\\
	&=\int_0^\infty \log_2\left(1+\frac{\rho \textcolor{black}{p_\mathrm{r}}\beta_{\mathrm{r}}^2w_\mathrm{r}}{d^{\alpha}d_{\mathrm{r}}^{\alpha}}\right)f_{W_\mathrm{r}}(w_\mathrm{r})dw_\mathrm{r}\\
	&=\frac{1}{\Gamma(\kappa)\tau^\kappa}\int_0^\infty w_\mathrm{r}^{\kappa-1}\mathrm{e}^{-\frac{w_\mathrm{r}}{\tau}} \log_2\left(1+\frac{\rho \textcolor{black}{p_\mathrm{r}}\beta_{\mathrm{r}}^2w_\mathrm{r}}{d^{\alpha}d_{\mathrm{r}}^{\alpha}}\right)dw_\mathrm{r},
\end{align}
where, by assuming $v_\mathrm{r}=\frac{\rho\textcolor{black}{p_\mathrm{r}} \beta^2_\mathrm{r}}{d^{\alpha}d_{\mathrm{r}}^{\alpha}}w_{\mathrm{r}}$ 
 and expressing the logarithm in terms of Meijer's G-function \cite{prudnikov1990more}, we have
\begin{align}\nonumber
	R^\mathcal{N}_\mathrm{r}&\leq\,\frac{1}{\tau^\kappa\Gamma(\tau)\ln 2}\left(\frac{d^\alpha d^{\alpha}_\mathrm{r}}{\rho\textcolor{black}{p_\mathrm{r}}\beta_\mathrm{r}^2}\right)^{\kappa}\\
	&\times\int_{0}^{\infty}v_\mathrm{r}^{\kappa-1}\mathrm{e}^{-\frac{d^\alpha d^\alpha_\mathrm{r}}{\rho\textcolor{black}{p_\mathrm{r}}\tau\beta_{\mathrm{r}}^2}v_\mathrm{r}}G_{2,2}^{1,2}\left(\begin{array}{c}
		v_\mathrm{r}\end{array}
	\Bigg\vert\begin{array}{c}
		(1,1)\\
		(1,0)\\
	\end{array}\right)\mathrm{d}
	v_\mathrm{r},
\end{align}
now, with the help of \cite[Eq. 2.24.3.1]{prudnikov1990more}, the above integral can be solved and \eqref{cov-r} is obtained. Similarly, the coverage region for user $u_\mathrm{r}$ can defined as
	\begin{align}
	&R^\mathcal{N}_\mathrm{t}\leq\mathbb{E}_{w_\mathrm{t}}\left[\log_2\left(1+\frac{\rho\textcolor{black}{p_\mathrm{t}} \beta_{\mathrm{t}}^2d^{-\alpha}d^{-\alpha}_\mathrm{t}w_\mathrm{t}}{ \rho\textcolor{black}{p_\mathrm{r}}\beta_{\mathrm{t}}^2d^{-\alpha}d^{-\alpha}_\mathrm{t}w_\mathrm{t}+1}\right)\right]\\
	&\hspace{-1ex}=\hspace{-0.75ex}\int_0^\infty \log_2\left(1+\frac{\rho\textcolor{black}{p_\mathrm{t}} \beta_{\mathrm{t}}^2d^{-\alpha}d^{-\alpha}_\mathrm{t}w_\mathrm{t}}{\rho \textcolor{black}{p_\mathrm{r}}\beta_{\mathrm{t}}^2d^{-\alpha}d^{-\alpha}_\mathrm{t}w_\mathrm{t}+1}\right)f_{W_\mathrm{t}}(w_\mathrm{t})dw_\mathrm{t}\\
	&\hspace{-1ex}=\hspace{-0.75ex}\frac{1}{\Gamma(\kappa)\tau^\kappa}\int_0^\infty \hspace{-2ex}w_\mathrm{t}^{\kappa-1}\mathrm{e}^{-\frac{w_\mathrm{t}}{\tau}} \log_2\hspace{-0.5ex}\left(1+\frac{\rho \textcolor{black}{p_\mathrm{t}}\beta_{\mathrm{t}}^2d^{-\alpha}d^{-\alpha}_\mathrm{t}w_\mathrm{t}}{ \rho\textcolor{black}{p_\mathrm{r}}\beta_{\mathrm{t}}^2d^{-\alpha}d^{-\alpha}_\mathrm{t}w_\mathrm{t}+1}\right)dw_\mathrm{t}\\\nonumber
	&\hspace{-1ex}=\frac{1}{\Gamma(\kappa)\tau^\kappa\ln2}\int_0^\infty  w_\mathrm{t}^{\kappa-1}\mathrm{e}^{-\frac{w_\mathrm{t}}{\tau}}\ln\left(1+\frac{\rho \textcolor{black}{\beta^2_{\mathrm{t}}}w_\mathrm{t}}{d^{\alpha} d^{\alpha}_\mathrm{t}}\right)dw_\mathrm{t}\\
	&\hspace{-1ex}-\frac{1}{\Gamma(\kappa)\tau^\kappa\ln2}\int_0^\infty \hspace{-0.2ex}w_\mathrm{t}^{\kappa-1}\mathrm{e}^{-\frac{w_\mathrm{t}}{\tau}}\ln\left(1+\frac{\rho\textcolor{black}{p_\mathrm{r}}\beta_{\mathrm{t}}^2w_\mathrm{t}}{d^\alpha d^\alpha_\mathrm{t}}\right)dw_\mathrm{t},
	\end{align}
where, by assuming $s_\mathrm{t}=\frac{\rho\textcolor{black}{\beta^2_{\mathrm{t}}}}{d^\alpha d_\mathrm{t}^\alpha}w_\mathrm{t}$,  $v_\mathrm{t}=\frac{\rho\textcolor{black}{p_\mathrm{r}}\beta_{\mathrm{t}}^2}{d^\alpha d_\mathrm{t}^\alpha}w_\mathrm{t}$, and using the Meijer's G-function instead of the logarithm, we have
\begin{align}\nonumber
	R^\mathcal{N}_\mathrm{t}&\leq\,\frac{1}{\Gamma(\kappa)\tau^\kappa\ln2}\left(\frac{d^\alpha d^{\alpha}_\mathrm{t}}{\rho\textcolor{black}{\beta^2_\mathrm{t}}}\right)^{\kappa}\\ \nonumber
	&\times\int_0^\infty s_\mathrm{t}^{\kappa-1}\mathrm{e}^{-\frac{d^\alpha d_\mathrm{t}^\alpha}{\rho\tau\textcolor{black}{\beta^2_\mathrm{t}}}s_\mathrm{t}}G_{2,2}^{1,2}\left(\begin{array}{c}
		s_\mathrm{t}\end{array}
	\Bigg\vert\begin{array}{c}
		(1,1)\\
		(1,0)\\
	\end{array}\right)ds_\mathrm{t}\\\nonumber
&-\frac{1}{\Gamma(\kappa)\tau^\kappa\ln2}\left(\frac{d^\alpha d^{\alpha}_\mathrm{t}}{\rho\textcolor{black}{p_\mathrm{r}}\beta_\mathrm{t}^2}\right)^{\kappa}\\
&\times\int_{0}^{\infty}v_\mathrm{t}^{\kappa-1}\mathrm{e}^{-\frac{d^\alpha d^\alpha_\mathrm{t}}{\rho\textcolor{black}{p_\mathrm{r}}\tau\beta_{\mathrm{t}}^2}v_\mathrm{t}}G_{2,2}^{1,2}\left(\begin{array}{c}
	v_\mathrm{t}\end{array}
\Bigg\vert\begin{array}{c}
	(1,1)\\
	(1,0)\\
\end{array}\right)\mathrm{d}
v_\mathrm{t},
\end{align}
where the above integrals can be calculated with the help of \cite[Eq. 2.24.3.1]{prudnikov1990more} and the proof is completed.
\end{proof}
%
\begin{theorem}\label{thm-cov-o}
	The coverage region  over a STAR-RIS-aided OMA communication for the user $u_\mathrm{k}$ with defined parameters $R^\mathcal{O}_\mathrm{k}$, $\kappa$, $\tau$, and $\rho$ is respectively given by 
	\begin{align}\label{cov-r-o}
		R^{\mathcal{O}}_\mathrm{k}\leq\frac{\textcolor{black}{\delta_\mathrm{k}}}{\Gamma\left(\kappa\right)\ln 2}G_{3,2}^{1,3}\left(\begin{array}{c}
			\frac{\tau\rho\textcolor{black}{p_\mathrm{k}\beta^2_\mathrm{k}}}{\textcolor{black}{\delta_\mathrm{k}}d^\alpha d_\mathrm{k}^\alpha}\end{array}
		\Bigg\vert\begin{array}{c}
			\left(1-\kappa,1,1\right)\\
			(1,0)\\
		\end{array}\right).
\end{align}
\end{theorem}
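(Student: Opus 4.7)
The plan is to mirror the proof of Theorem~\ref{thm-cov-n}, specializing it to the interference-free OMA setting. Under an OMA/TDMA policy that dedicates a fraction $\delta_\mathrm{k}$ of the orthogonal resource to user $u_\mathrm{k}$ (with the per-slot SNR correspondingly scaled by $1/\delta_\mathrm{k}$), the ergodic achievable-rate bound reads
\begin{align*}
R^{\mathcal{O}}_\mathrm{k} \le \delta_\mathrm{k}\,\mathbb{E}_{w_\mathrm{k}}\!\left[\log_2\!\left(1+\frac{\rho p_\mathrm{k}\beta_\mathrm{k}^2 w_\mathrm{k}}{\delta_\mathrm{k}\, d^\alpha d_\mathrm{k}^\alpha}\right)\right],
\end{align*}
which involves a single logarithmic term, in contrast with the NOMA transmitting-user case where a difference of two such terms had to be handled.

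First, I would insert the Gamma PDF of $W_\mathrm{k}$ from the approximation used earlier and rewrite the expectation as a one-dimensional integral in $w_\mathrm{k}$. Second, via the change of variable $v_\mathrm{k} = \frac{\rho p_\mathrm{k}\beta_\mathrm{k}^2}{\delta_\mathrm{k}\, d^\alpha d_\mathrm{k}^\alpha}\, w_\mathrm{k}$, the integrand takes the canonical form $v_\mathrm{k}^{\kappa-1} e^{-\mu v_\mathrm{k}} \ln(1+v_\mathrm{k})$ with $\mu = \delta_\mathrm{k}\, d^\alpha d_\mathrm{k}^\alpha/(\tau \rho p_\mathrm{k}\beta_\mathrm{k}^2)$, up to an overall prefactor that absorbs the Jacobian together with the $\tau^\kappa$ of the Gamma PDF. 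Third, I would represent $\ln(1+v_\mathrm{k})$ by a $G_{2,2}^{1,2}$ Meijer's G-function with the same parameter rows $(1,1)$ and $(1,0)$ used in the proof of Theorem~\ref{thm-cov-n}.

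At this stage the integral has the exact form evaluated in \cite[Eq.~2.24.3.1]{prudnikov1990more}, which produces a $G_{3,2}^{1,3}$ function whose argument is $1/\mu = \tau\rho p_\mathrm{k}\beta_\mathrm{k}^2/(\delta_\mathrm{k}\, d^\alpha d_\mathrm{k}^\alpha)$ and whose parameter rows become $(1-\kappa,1,1)$ and $(1,0)$, matching \eqref{cov-r-o}. Collecting the surviving multiplicative constant $\delta_\mathrm{k}/(\Gamma(\kappa)\ln 2)$ (after the $\tau^\kappa$ cancellation and the change from $\log_2$ to $\ln$) then yields precisely the expression in the theorem.

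The only real obstacle is bookkeeping: one has to verify that $\delta_\mathrm{k}$, $p_\mathrm{k}$, and $\beta_\mathrm{k}^2$ propagate consistently through both the exponential factor and the argument of the Meijer's G-function, since $\delta_\mathrm{k}$ enters at the rate-prefactor stage, inside the $\log_2$ argument through the TDMA rescaling, and ultimately in the $G$-function argument. Conceptually, the derivation is a strict simplification of the NOMA proof, as the absence of inter-user interference eliminates the need to split the rate into the difference of two G-function evaluations.
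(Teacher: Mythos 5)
Your proposal is correct and follows essentially the same route as the paper, which itself simply states that the OMA result is obtained by applying the steps of the Theorem~\ref{thm-cov-n} proof to the single-logarithm expectation $\delta_\mathrm{k}\,\mathbb{E}_{w_\mathrm{k}}[\log_2(1+\rho p_\mathrm{k}\beta_\mathrm{k}^2 w_\mathrm{k}/(\delta_\mathrm{k} d^\alpha d_\mathrm{k}^\alpha))]$. Your tracking of how $\delta_\mathrm{k}$ propagates into the G-function argument and the prefactor matches the stated result exactly.
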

\begin{proof}
Defining $\textcolor{black}{\delta_\mathrm{k}}\in[0,1]$ as the share of resources allocated to each user in OMA, we have
\begin{align}
    R^\mathcal{O}_\mathrm{k}\leq\mathbb{E}_{w_\mathrm{k}}\left[\textcolor{black}{\delta_\mathrm{k}}\log_2\left(1+\frac{\rho \textcolor{black}{p_\mathrm{k}\beta_{\mathrm{k}}^2w_\mathrm{k}}}{\textcolor{black}{\delta_\mathrm{k}}d^{\alpha}d_{\textcolor{black}{\mathrm{k}}}^{\alpha}}\right)\right].
\end{align}
With these definitions, the proof is directly obtained following similar steps as those
in the proof of Theorem \ref{thm-cov-n}.
\end{proof}\vspace{-0.6cm}
\begin{figure*}[!t]
	\centering
	\subfigure[]{%
		\includegraphics[width=0.34\textwidth]{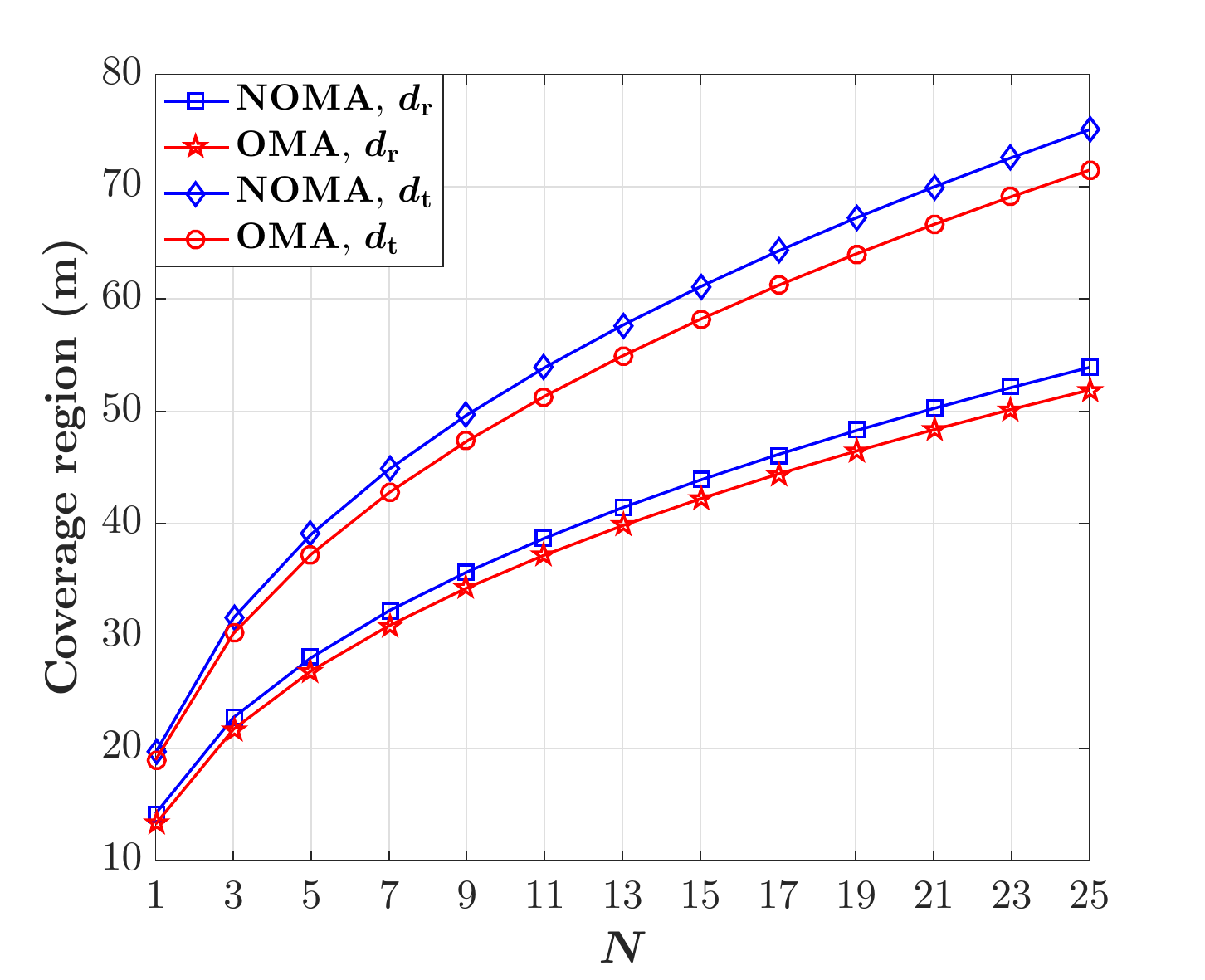}%
		\label{cov_n}%
	}
	\subfigure[]{%
		\includegraphics[width=0.34\textwidth]{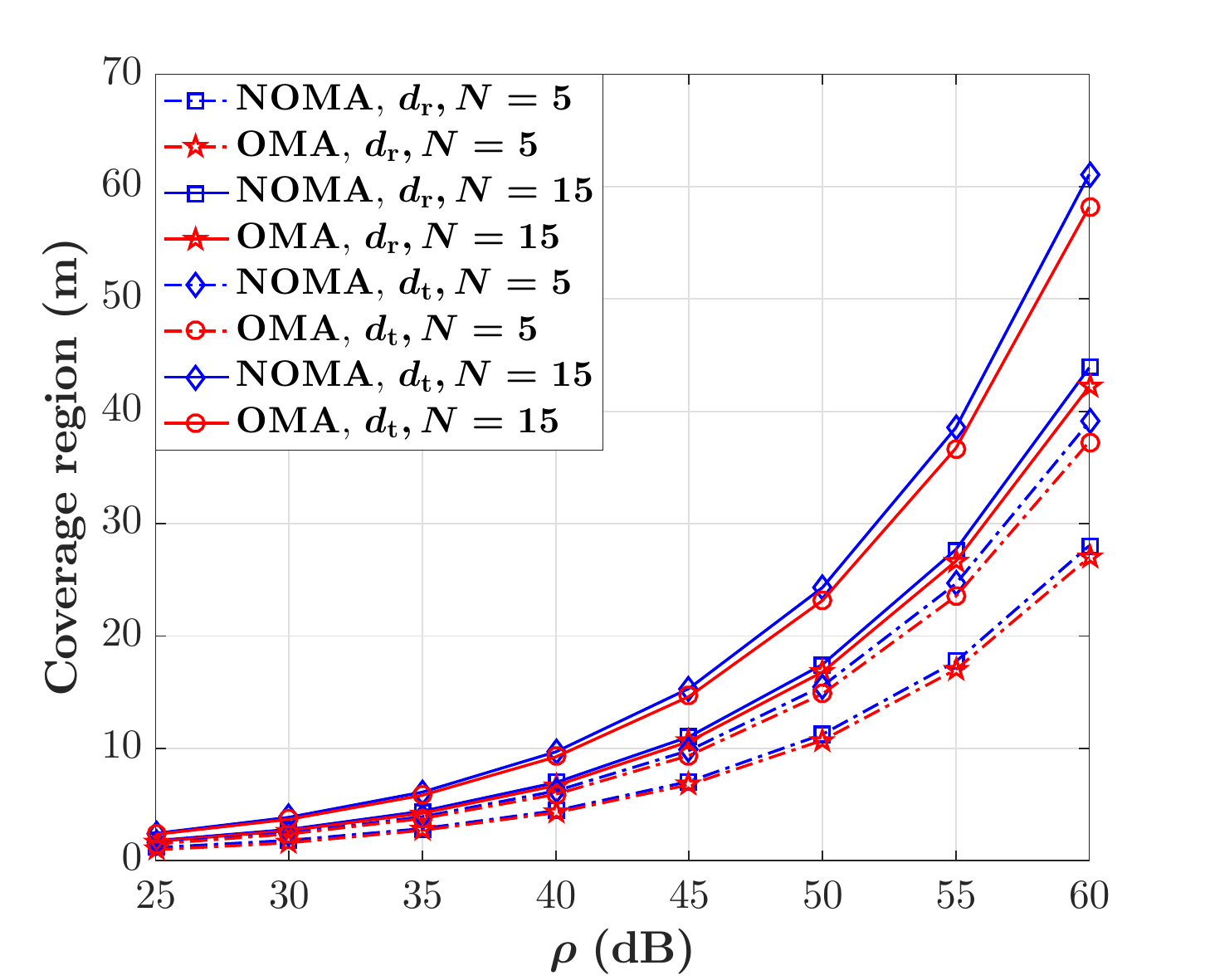}%
		\label{dr_rho}%
	}
	\subfigure[]{%
		\includegraphics[width=0.34\textwidth]{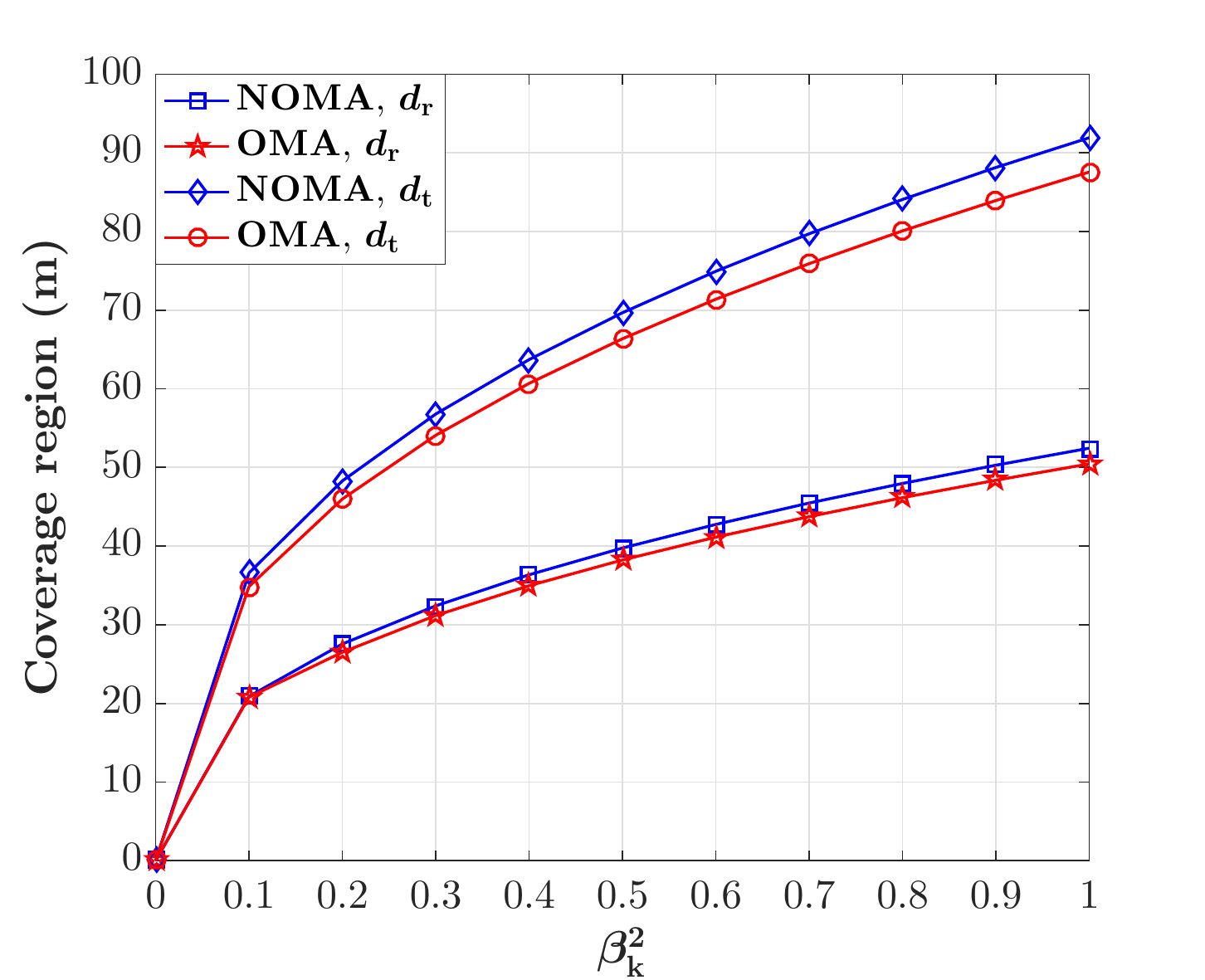}%
		\label{dr_beta}%
	}
	\caption{The coverage regions $d_\mathrm{r}$ and $d_\mathrm{t}$ versus (a) The number of STAR-RIS element $N$; (b) The transmit SNR $\rho$; and (c) The square of the transmission/reflection coefficients $\beta_\mathrm{k}$.}\vspace{-0.3cm}
\end{figure*}
\section{Numerical Results}\label{sec-num}
We present numerical results to evaluate the analytical expressions previously derived. We assume that the BS is located at the origin of the two-dimensional coordinate and the STAR-RIS with $N$ elements is placed $d=10$ meters away from it. We also set $P=30$dBm, $\sigma^2=-30$dBm, $\beta_{\mathrm{r}}=0.8$,  $\beta_{\mathrm{t}}=0.6$, $p_\mathrm{r}=0.4$, $p_\mathrm{t}=0.6$, $\alpha=2.5$,  $R^\mathcal{K}_\mathrm{r}=0.8$bps/Hz, and  $R^\mathcal{K}_\mathrm{t}=0.3$bps/Hz for $\mathcal{K}\in\{\mathcal{N},\mathcal{O}\}$. In addition, we consider the power allocation factor and the time allocation factor for both users in OMA as $p_\mathrm{k}=0.5$ and $\delta_\mathrm{k}=0.5$, respectively. 

Fig. \ref{cov_n} shows the performance of the coverage region for both reflecting and transmitting users based on the number of STAR-RIS elements $N$. We see that as $N$ increases, the STAR-RIS covers a larger area for users $u_\mathrm{r}$ and $u_\mathrm{t}$. In addition, we can observe that user $u_\mathrm{t}$ can be located in a region further away from the STAR-RIS compared with user $u_\mathrm{r}$, which confirms the adequacy of STAR-RIS to support outdoor users. Moreover, it is observed that the STAR-RIS under NOMA scenario provides a remarkable performance in terms of the coverage region compared with the OMA case since the simultaneously transmitting and reflecting scheme can increase the channel disparity between these two, where NOMA provides better performance than OMA. The behavior of the coverage region for users $u_\mathrm{r}$ and $u_\mathrm{t}$ in terms of the transmit SNR $\rho$ is presented in Fig. \ref{dr_rho}. In this figure, we 
observe that for a fixed value of $\rho$, increasing the number of STAR-RIS elements has a more noticeable effect on the variation of distance $d_\mathrm{t}$ compared with $d_\mathrm{r}$. The aforesaid results indicate that applying the STAR-RIS into NOMA networks improves the system performance in terms of the coverage region. \textcolor{black}{It can be seen from Thms. \ref{thm-cov-n} and \ref{thm-cov-o} that the coverage region for each user highly depends on its corresponding reflection/transmission coefficients. In this regard, Fig. \ref{dr_beta} shows the coverage region in terms of the square of the reflection/transmission coefficients $\beta^2_\mathrm{k}$ for a fixed value of $N=15$. It is observed that as $\beta^2_\mathrm{k}$ grows, a larger coverage area can be provided by the STAR-RIS for the respective user $u_\mathrm{k}$. In addition, we can see that under an equal value of reflection and transmission coefficients (i.e., $\beta^2_\mathrm{r}=\beta^2_\mathrm{t}=0.5$), STAR-RIS offers a larger coverage area for user $u_\mathrm{t}$.}\vspace{-0.3cm}

\vspace{0cm}\section{Conclusion}\label{sec-con}\vspace{0cm}
In this paper, we analyzed a downlink STAR-RIS-aided NOMA/OMA communication system, and provided analytical expressions for the coverage regions for users in the transmission and reflection planes of the STAR-RIS under the energy-splitting protocol. We showed that the use of STAR-RIS provides a larger coverage area for \textit{both} transmitting and reflecting users in the NOMA network compared to the benchmark OMA system.\vspace{-0.1cm}
	
\bibliographystyle{IEEEtran}
\bibliography{sample.bib}

\end{document}